\documentclass[12]{article}

\usepackage{amsmath,amsthm,amssymb}
\usepackage{fullpage}
\usepackage{graphics,graphicx}
\usepackage{mwe}
\usepackage{caption}
\usepackage{subcaption}
\usepackage{booktabs}
\usepackage{setspace}
\usepackage{natbib}     

\usepackage{hyperref}

\newtheorem{definition}{Definition}
\newtheorem{theorem}{Theorem}

\newtheorem{lemma}{Lemma}
\newtheorem{proposition}{Proposition}

\hypersetup{
    colorlinks=true,
    linkcolor=blue,
    filecolor=magenta,      
    urlcolor=blue,
    citecolor=blue,     
    pdfpagemode=FullScreen,
}

\title{Segregation Monotonicity and the Measurement of Inequality in Social Networks}

\author{Deepankar Basu\thanks{Department of Economics, University of Massachusetts Amherst. Email: \url{dbasu@umass.edu}. I would like to thank Debarshi Das, Arjun Jayadev, Debraj Ray and Rajiv Sethi for  helpful comments on previous versions of this paper. All remaining errors are mine.}}

\begin{document}

\maketitle

\doublespacing

\begin{abstract}
This paper introduces segregation monotonicity as a criterion for evaluating measures of inequality in social networks. A network inequality measure satisfies segregation monotonicity if, holding the distribution of income fixed, it weakly increases as the network becomes more segregated according to a specified transformation of network architecture. I investigate this property using a class of level-$k$ star networks that represent increasing social segregation in the following sense: relatively poorer individuals become increasingly isolated from one another and their social comparisons become increasingly concentrated among richer individuals. I show that a relative deprivation-based measure of inequality, which aggregates comparisons with richer network neighbors, satisfies segregation monotonicity. By contrast, total experience-based measures, which aggregate absolute income differences among all network neighbors, do not generally satisfy segregation monotonicity and can decline as segregation rises. I also establish several relationships between the total experience-based measures and the standard Gini coefficient. The results show that alternative network-based inequality measures can embody fundamentally different conceptions of socially relevant comparisons.\\
Keywords: Inequality measurement; social networks; segregation monotonicity; relative deprivation; experienced inequality.\\
JEL Codes: D63, C43, Z13.
\end{abstract}

\section{Introduction}
Economic inequality is usually understood as a property of a distribution of income, wealth, or some other economically relevant variable. This perspective abstracts from the social structure through which individuals observe, experience, and compare economic differences. Yet, as a large literature in social psychology emphasize, individuals do not generally observe or compare themselves with the entire population. Rather, their perceptions of economic differences are shaped by the particular social environments in which they are embedded and by the individuals with whom they interact \citep{festinger_1954, dimaggio_garip_2012, chiang_2011, chiang_2015}. Once social structure is taken seriously, inequality becomes a property not only of the distribution of income but also of the architecture of socially relevant comparisons.

This observation raises a fundamental question for the measurement of inequality in social networks. Suppose that the distribution of income remains unchanged, but the network of social connections becomes increasingly segregated. In particular, suppose that relatively poorer individuals become increasingly isolated from one another while their remaining social comparisons become increasingly concentrated among richer individuals. Should a measure of inequality register such a change in network architecture as an increase in inequality?

This paper argues that this question provides a useful criterion for evaluating measures of inequality in social networks. I call this criterion \textit{segregation monotonicity}. Informally, a network inequality measure satisfies segregation monotonicity if, holding the distribution of income fixed, it weakly increases when the network undergoes a specified transformation that increases social segregation. The criterion directs attention to a feature of network inequality measures that is distinct from their values for a particular distribution and network architecture: how they respond to meaningful \textit{changes} in the structure of social comparisons.

The motivation for this criterion comes from a growing literature on social comparison, networks, and inequality. Research in social psychology and economics suggests that individuals do not form judgments about economic inequality by comparing themselves with the entire distribution of income in society \citep{festinger_1954, chiang_2015}. Rather, the relevant reference groups are shaped by local social environments. Experimental and theoretical work has shown that judgments of distributional inequality can depend on the structure of social networks and on the composition of individuals' local reference groups \citep{chiang_2011}. Related research has demonstrated that network topology can substantially affect assessments of inequality even when the underlying distribution of income remains unchanged \citep{chiang_2015}.

The importance of network architecture is also emphasized in the broader literature on social networks and social inequality. Network effects can amplify initial economic differences through mechanisms such as homophily, cumulative advantage, differential access to information, and unequal opportunities for social interaction \citep{dimaggio_garip_2012}. From this perspective, inequality is not only reflected in the distribution of economic resources but can also be reinforced by inequalities in the structure of social connections. Empirical research has further linked fragmented and segregated social networks with higher levels of economic inequality \citep{Toth2021}. These findings suggest that changes in social segregation may themselves be relevant to the measurement of inequality, even when individual incomes remain unchanged.

The concept of segregation monotonicity does not require a universal ordering of arbitrary networks according to their degree of segregation. Such an ordering would itself raise difficult conceptual and empirical questions. Instead, segregation monotonicity is defined with respect to a specified and economically meaningful transformation of network architecture. The question is therefore whether a measure responds monotonically when a network undergoes a particular transformation that represents an increase in segregation while holding the distribution of income fixed.

To investigate this question, I introduce a class of \textit{level-$k$ star networks}. A standard star network, in which a single individual occupies the center, is generalized to a structure with $k$ ($1 \leq k \leq n-1$) individuals at a shared center. Moving from a level-$k$ star network to a level-$(k-1)$ star network represents a particular increase in social segregation: relatively poorer individuals become increasingly isolated from one another, while their remaining social comparisons become increasingly concentrated among richer individuals.

This construction is related to ideas developed in the literature on network inequality. For example, \citet{BOWLES2020108789} emphasize that richer individuals often have more social connections. It stands to reason that richer individuals may therefore derive relatively greater benefits from social connections, for instance through social capital or information flows related to economic opportunities. As social networks become increasingly concentrated around richer individuals, the relatively poorer individuals are impoverished in terms of social connections. Turning to comparisons that are relevant for experienced inequality, we see that the average income level of the comparison groups of poorer individuals may rise. Consequently, poorer individuals may experience greater inequality even when individual income levels remain unchanged \citep{BOWLES2020108789, chiang_2015}. While Bowles and Carlin illustrate this idea using a three-person example, the level-$k$ star construction developed in this paper generalizes it to an $n$-person setting and provides a precise transformation with which to evaluate segregation monotonicity.

The central results of the paper concern two recently proposed approaches to constructing measures of inequality in social networks. The first approach is what I call the \textit{total experience-based approach}. This approach was implemented, respectively, by \citet{BOWLES2020108789} and \citet{MAMUNURU2025106799}. These measures aggregate absolute income differences among all socially connected individuals. The second approach is what I call a \textit{relative deprivation-based approach}. This approach, implemented in \citet{stark_etal_2024}, considers only upward comparisons with richer network neighbors while constructing a measure of inequality.

The inequality measures proposed in these two approaches differ sharply in their response to increasing segregation. Holding individual income levels fixed, I show that the relative deprivation-based measure, denoted by $I^{SBF}$ (SBF = Stark, Bielawski and Falniowski), weakly increases when society moves from a level-$k$ to a level-$(k-1)$ star network. Thus, $I^{SBF}$ satisfies segregation monotonicity with respect to this network transformation. By contrast, the total experience-based measures, denoted by $I^{BC}$ (BC = Bowles and Carlin) and $I^{MSJ}$ (MSJ = Mamunuru, Shrivastava and Jayadev), do not generally satisfy segregation monotonicity. I construct a counterexample in which increasing segregation causes both measures to decline.

These results show that alternative measures of inequality in social networks can embody fundamentally different conceptions of which social comparisons matter. The relative deprivation-based approach gives special significance to upward comparisons and therefore responds systematically when poorer individuals' comparisons become increasingly concentrated among richer individuals. The total experience-based approach aggregates absolute income differences symmetrically across network connections and need not respond monotonically to the same increase in segregation. The distinction between the approaches is therefore not merely a matter of alternative formulas for aggregating the same network-based income differences. It concerns the more fundamental question of how changes in the architecture of social comparison should affect the measurement of inequality.

The paper also establishes several additional relationships among the measures. For general populations, when network structures are not taken into account, the Gini coefficient ($G$) is a widely used measure of inequality in distributions of income, wealth, or other relevant variables.\footnote{In this paper, I use income as the relevant variable.} Its popularity rests on several desirable properties \citep{ray_1998, lambert_2001}. The total experience-based measures developed by \citet{BOWLES2020108789} and \citet{MAMUNURU2025106799} take the Gini coefficient as their point of departure. I show that $I^{BC}=I^{MSJ}$ in both a complete network and a star network; that $ 0 \leq I^{BC}=I^{MSJ} \leq 1$ in a complete network; and that $ I^{BC}=I^{MSJ}=G/(1-(1/n))$ in a complete network with $n$ individuals. These results clarify the relationship between the total experience-based measures and the standard Gini coefficient and provide a useful basis for the subsequent comparison of their behavior under increasing segregation.

The remainder of the paper is organized as follows. Section~\ref{sec:bc-formulation} discusses the two approaches for constructing measures of inequality in networks. Section~\ref{sec:relationship} compares the measures coming out of the two approaches in terms of some key properties. Section~\ref{sec:star-k} introduces level-$k$ star networks and analyzes segregation monotonicity. Section~\ref{sec:conclusion} concludes. While shorter proofs are part of the main text, longer proofs are collected together in an appendix.

\section{Two Approaches}\label{sec:bc-formulation}
In this section, I discuss two approaches to measuring inequality in networks. The first approach, developed in \citet{BOWLES2020108789} and \citet{MAMUNURU2025106799}, is what I will call the total experience-based approach. It takes the standard interpretation of the Gini coefficient as the point of departure and offer an alternative based on its critique of the standard Gini coefficient. The second approach, developed in \citet{stark_etal_2024}, is what I will call the relative deprivation-based approach. It takes as its point of departure the idea of relative deprivation that was presented by \citet{yitzhaki_1979} as a way to reinterpret the Gini coefficient. 

\subsection{Total experience-based approach}
Let us start by recalling the details of the standard Gini coefficient.\footnote{\citet{ray_2024} points out that Corrado Gini proposed $13$ measures of inequality. The two that have survived the test of time are what I will refer to in this paper as $G$ (the standard Gini coefficient) and $I^{BC}$ (the Bowles-Carlin reformulated Gini coefficient).} Consider a population of $n$ individuals with non-negative income levels $y_i \geq 0$, where $i=1, 2, \ldots, n$, where at least one person has strictly positive income. Let $\bar{y}=n^{-1} \sum_{i=1}^{i=n} y_i$ denote the mean income level in this population, which is strictly positive because at least one person has strictly positive income. The standard definition of the Gini coefficient \citep[page~31]{sen_1997} is
\begin{equation}\label{gini-std}
G = \frac{1}{2 n^2 \bar{y}} \sum_{j=1}^{j=n} \sum_{i=1}^{i=n} \big| y_j-y_i \big| =  \left[ \frac{1}{n^2} \sum_{j=1}^{j=n} \sum_{i=1}^{i=n} \big| y_j-y_i \big|  \right] \frac{1}{\bar{y}} \frac{1}{2},
\end{equation} 
which, as re-written on the far right hand side, shows it to be the product of three factors: (a) the arithmetic mean of absolute differences of income (first term), (b) the reciprocal of mean income (second term), and (c) one-half (third term). 

In the standard definition of the Gini coefficient in (\ref{gini-std}), $n$ of the terms in the sum are identically zero. These are the absolute values of the `own comparisons,' $\big| y_j-y_j \big|, j=1, 2, \ldots, n$. \citet{BOWLES2020108789} suggest removing such terms from the numerator and accordingly changing the denominator to reflect the reduced number of terms being added in the denominator. Accordingly, their reformulated Gini coefficient, $I^{BC}$, is given by
\begin{equation}\label{gini-bowles}
I^{BC} =  \left[ \frac{1}{\frac{n(n-1)}{2}} \sum_{j=1}^{j=n-1} \sum_{i=j+1}^{i=n} \big| y_j-y_i \big|  \right] \frac{1}{\bar{y}} \frac{1}{2}.
\end{equation} 
This is once again a product of the same three factors that appear in $G$ with the only difference being that the first factor, the arithmetic mean of absolute differences of income, is computed for the $n(n-1)/2$ pairs of unique non-identical comparisons, instead of $n^2$ (as in the standard Gini coefficient).\footnote{As pointed out by \citet{sethi_2024}, similar reformulations of the standard Gini coefficient have been previously explored by \citet{jasso_1979} and \citet{deaton_1997}.} According to \citet{BOWLES2020108789}, this index is a way to measure `experienced inequality' because it only takes account of unique non-identical comparisons (instead of all possible, including, own comparisons).\footnote{For an illuminating discussion about violation of the population (replication) principle axiom by $I^{BC}$, see \citet{ray_2024}, \citet{sethi_2024} and \citet{bowles_carlin_2024}.}   

The Bowles-Carlin intuition was further developed by \citet{MAMUNURU2025106799}. To conveniently express their measure of inequality, $I^{MSJ}$, we need $2$ symmetric $n \times n$ matrices, $A$ (adjacency matrix) and $D$ (absolute difference matrix). The $(i,j)$ element of the adjacency matrix, $A$, captures the adjacency of (or direct connection between) individuals $i$ and $j$, as follows:
\begin{equation}\label{adj-matrix}
a_{ij} = 
\begin{cases}
1 \quad \text{ if } (i,j) \text{ directly connected and } i \neq j  \\
0 \quad \text{ otherwise},
\end{cases}
\end{equation}
so that the nonzero elements of the $i$-th row of $A$ records all the direct connections of the $i$-th individual in the network. Note that $A$ is symmetric because $a_{ij}$ and $a_{ji}$ is equal by definition. Furthermore, the principal diagonal of $A$ has all zero entries. The $(i,j)$ element of the absolute difference matrix, $D$, is the absolute difference of incomes of individuals $i$ and $j$, as follows:
\begin{equation}\label{diff-matrix}
d_{ij} = 
\begin{cases}
|y_i - y_j|  \quad \text{ if } i \neq j  \\
0 \quad \text{ otherwise}.
\end{cases}
\end{equation}
Much like $A$, $D$ is symmetric because $|y_i - y_j|=|y_j - y_i|$, and it has zeros on the principal diagonal. The $i$-th row (or column) of $D$ records the absolute difference of individual $i$'s income with all other individuals in the network.
 
Using $A$ and $D$, we can now define, for each individual $i=1, 2, \dots, n$, what \citet[page~5--6]{MAMUNURU2025106799} call the \textit{neighborhood average difference} as, 
\begin{equation}\label{eq:delta-defn}
\Delta_i = \frac{A_{i *} D_{* i}}{A_{i *} \mathbf{1}_n} = \frac{\sum_{j = i}^{n} a_{ij} |y_j - y_i|}{\sum_{j=1}^{n} a_{ij}},
\end{equation}
where $A_{i *}$ and $ D_{* i}$ denote $i$-th row and $i$-th column of $A$ and $D$ respectively and $\mathbf{1}_n$ denotes a $n$-vector of $1$s. By using the $i$-th row of $A$, the neighborhood average difference computes the arithmetic mean of absolute differences for the $i$-th individual with respect to only those individuals who are directly connected to (or direct neighbors of) this individual. This is precisely how they formalize the notion of experienced inequality: comparisons occur only with direct neighbors, not with all individuals in the network (society).  

The measure of inequality in \citet{MAMUNURU2025106799} is then defined as,
\begin{equation}\label{gini-mjs}
I^{MSJ} = \left[ \frac{1}{n} \sum_{i=1}^{n} \Delta_i \right] \frac{1}{\bar{y}}\frac{1}{2}. 
\end{equation}
Once again, this is a product of the same three factors that appear in $G$ or $I^{BC}$ with the only difference being that the first factor is now the arithmetic mean of neighborhood average differences across all the $n$ individuals in the network. 

\subsection{Relative deprivation-based approach}
The measure of inequality for networks that was proposed in \citet{stark_etal_2024} builds on the idea of relative deprivation presented in \citet{yitzhaki_1979}. To understand this measure, we will continue using the adjacency matrix $A$ defined in (\ref{adj-matrix}). But instead of taking the absolute income difference of an individual's income with all her direct neighbors, as in the total experience-based approach presented above, we will restrict this difference to the subset of direct neighbors who are richer than the individual and identically set the difference to $0$ for poorer neighbors. Thus, we are only looking up the income ladder. It is this feature that captures the idea of relative deprivation: we are relatively deprived only with respect to those who are richer than us.

To conveniently express the measure of inequality proposed in \citet{stark_etal_2024}, which I will denote by $I^{SBF}$, we need another symmetric $n \times n$ matrices, $P$ (positive income difference matrix). The $(i,j)$ element of the positive difference matrix, $P$, gives the difference of incomes of individuals $i$ and $j$ only if that difference is positive, as follows:
\begin{equation}\label{pos-diff-matrix}
p_{ij} = 
\begin{cases}
\max \left\lbrace y_i - y_j, 0 \right\rbrace   \quad \text{ if } i \neq j  \\
0 \quad \text{ otherwise}.
\end{cases}
\end{equation}
Unlike $A$ and $D$, the matrix $P$ is not symmetric because $\max \left\lbrace y_i - y_j, 0 \right\rbrace \neq \max \left\lbrace y_j - y_i, 0 \right\rbrace$ when the nonnegative income levels are unequal, $y_i \neq y_j$. But like $A$ and $D$, the matrix $P$ has zeros on the principal diagonal. The nonzero elements of the $i$-th column of $P$ records the income shortfalls of individual $i$ with respect to all her richer counterparts in the network.

Using $A$ and $P$, we can define a measure of `relative deprivation', for individual $i$, as,
\begin{equation}
RD_i = \frac{A_{i *} P_{* i}}{A_{i *} \mathbf{1}_n} = \frac{\sum_{j = 1}^{n} a_{ij} \max \left\lbrace y_j - y_i, 0 \right\rbrace }{\sum_{j=1}^{n} a_{ij}},
\end{equation}
where I use the $i$-th column of $P$ because $p_{ji}$ captures the deprivation of $i$ relative to the richer individual $j$. By using the $i$-th row of $A$, once again, the comparison is restricted to direct neighbors only; and the use of the $i$-th column of $P$ ensures we only compute the arithmetic mean of income shortfalls.  

The measure of inequality proposed in \citet{stark_etal_2024}, is then obtained by normalizing the sum of $RD_i$ across all individuals by the maximum possible sum of $RD_i$ over all possible network structures keeping the \textit{total} income of the $n$ individuals in the network fixed while allowing the distribution of that fixed total income to change. The maximum possible sum of $RD_i$ arises in a star network where all the income is received by the person at the center (with zero income for the others) and can be shown to be $(n-1) \sum_i y_i$. Hence, we get the index of inequality given in \citet[equation~4]{stark_etal_2024}:
\begin{equation}\label{gini-sbf}
I^{SBF} = \left[\sum_{i=1}^{n} RD_i \right] \frac{1}{(n-1)\sum_{i=1}^{n} y_i}. 
\end{equation}
Once again, this can be written as a product of three factors, as in $I^{BC}$ or $I^{MSJ}$, but the factors are different. The first factor is the sum, rather than average, of relative deprivation. The second factor is the reciprocal of the total, rather than average, income in the network; the third factor is the reciprocal of $n-1$, rather than $1/2$.

\section{Key Properties of the Two Types of Measures}\label{sec:relationship}
Proposition~1 in \citet{stark_etal_2024} presents $4$ important properties of the relative deprivation-based measure of inequality for networks, $I^{SBF}$: it is bounded by the unit interval; it takes the minimum value of $0$ if income is divided equally between all the individuals; it takes the maximum value of $1$ if the network structure is a star network with the individual at the center receiving all the income (while all others have zero income); in a complete network, $I^{SBF} = G/(n-1) $, i.e. the measure of inequality is a rescaled value of the Gini coefficient \citep[page~352]{stark_etal_2024}.

Do the total experience-based measures of inequality for networks, $I^{BC}$ and $I^{MSJ}$, satisfy these properties? \citet{MAMUNURU2025106799} show that $I^{MSJ}$ is unbounded in terms of sample size: its value lies in the interval $[0,n/2]$, where $n$ is the number of individuals in the network. This is in contrast to $I^{SBF}$ which is bounded by the unit interval.  \citet{MAMUNURU2025106799} also show that $I^{MSJ}$ has similar properties relating to its maximum and minimum values: for a fixed total income and nodes, $I^{MSJ}$ attains the maximum value of $n/2$ for a star network organization with all income obtained by the individual at the center; $I^{MSJ}$ attains the minimum value of $0$ when income is equally distributed \citep[page~7]{MAMUNURU2025106799}.

In this section, I complement the discussion in \citet{MAMUNURU2025106799} about the properties of the total experience-based measures of inequality for networks by presenting three additional properties of $I^{BC}$ and $I^{MSJ}$. First, I show that $I^{BC}=I^{MSJ}$, for two special network structures: a complete network and a star network.\footnote{The result about complete networks is noted in \citet[page~9]{MAMUNURU2025106799}. I offer a slightly different proof.} For other network structures, these two network-based Gini coefficients have different values in general. Second, I show that for a complete network, the two social network-based reformulations of the Gini coefficients are bounded by the unit interval. For other network structures, that cannot be guaranteed. Third, I show that for a complete network, $I^{BC}=I^{MSJ}$ are rescaled versions of the standard Gini coefficient.

The strategy of proving equivalence, or otherwise, of the two total experience-based measures of inequality rely on comparing the first term in (\ref{gini-bowles}) and (\ref{gini-mjs}) because the other two terms are identical. For a network that is neither complete nor a star network, the terms being summed up to compute the first term in (\ref{gini-bowles}) and (\ref{gini-mjs}) are generally different. That is why $I^{BC} \neq I^{MSJ}$ for such networks. But for a complete network and for a star network, the first term in (\ref{gini-bowles}) and (\ref{gini-mjs}) coincide.

\subsection{Complete network}
\begin{lemma}\label{lem:bc_msj_cnet}
For a complete income network, where each individual is directly connected with all other individuals in the network, $I^{MSJ}=I^{BC}$.
\end{lemma}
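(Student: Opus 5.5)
Since the factors $1/\bar{y}$ and $1/2$ appear identically in (\ref{gini-bowles}) and (\ref{gini-mjs}), the plan is to show that the first factors coincide, namely
\begin{equation*}
\frac{1}{n}\sum_{i=1}^{n}\Delta_i \;=\; \frac{2}{n(n-1)}\sum_{j=1}^{n-1}\sum_{i=j+1}^{n}\big|y_j-y_i\big|.
\end{equation*}
The first step is to specialize the adjacency matrix to the complete network. There $a_{ij}=1$ for all $i\neq j$ and $a_{ii}=0$, so the denominator in (\ref{eq:delta-defn}) is $A_{i*}\mathbf{1}_n=n-1$ for every $i$. Its numerator is $\sum_{j\neq i}|y_j-y_i|=\sum_{j=1}^n d_{ij}$, which uses the fact that $d_{ii}=0$. (I read the summation index in (\ref{eq:delta-defn}) as running over all $j$, consistent with the matrix form $A_{i*}D_{*i}$.) Hence $\Delta_i=\frac{1}{n-1}\sum_{j=1}^n d_{ij}$.

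The second step is to average over $i$:
\begin{equation*}
\frac{1}{n}\sum_i\Delta_i=\frac{1}{n(n-1)}\sum_{i=1}^n\sum_{j=1}^n d_{ij}.
\end{equation*}
The double sum covers all ordered pairs. Because $D$ is symmetric with a zero diagonal, it equals twice the sum over the unordered pairs $j<i$:
\begin{equation*}
\sum_{i,j} d_{ij}=2\sum_{j=1}^{n-1}\sum_{i=j+1}^n|y_j-y_i|.
\end{equation*}
Substituting this gives exactly the first factor of $I^{BC}$, and multiplying both sides by $\frac{1}{2\bar y}$ yields $I^{MSJ}=I^{BC}$.

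There is no real obstacle here. The only point needing care is the bookkeeping that turns the sum over ordered pairs into the sum over unordered pairs, using the symmetry of $D$ and its zero diagonal. A second point is that the neighborhood size $n-1$ is the same for every individual, and this is exactly what lets the average of the $\Delta_i$ collapse into a single double sum.
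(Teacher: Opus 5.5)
Your proposal is correct and follows the paper's proof essentially step for step. You specialize $\Delta_i$ to $\frac{1}{n-1}\sum_{j=1}^{n}|y_j-y_i|$ using the common neighborhood size $n-1$, average over $i$, and convert the ordered-pair double sum into twice the unordered-pair sum via $|y_j-y_i|=|y_i-y_j|$, which recovers the first factor of $I^{BC}$. Your reading of the summation in the definition of $\Delta_i$ as running over all $j$ is also the one the paper's proof implicitly uses.
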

\begin{proof}
For a complete network, using the definition of the neighborhood average difference in (\ref{eq:delta-defn}), we have
\[
\Delta_i = \frac{1}{n-1} \sum_{j=1}^{n} |y_j - y_i|, \quad i=1, 2, \ldots, n.
\]
Hence, the first term in (\ref{gini-mjs}) is
\begin{equation}\label{bc_msj_cnet}
\frac{1}{n} \sum_{i=1}^{n} \Delta_i = \frac{1}{n (n-1)} \sum_{i=1}^{n} \sum_{j=1}^{n} |y_j - y_i| = \frac{2}{n (n-1)} \sum_{i=1}^{n-1} \sum_{j = i+1}^{n} |y_j - y_i| = \frac{1}{\frac{n(n-1)}{2}} \sum_{i=1}^{n-1} \sum_{j = i+1}^{n} |y_j - y_i|
\end{equation}
where the penultimate step follows because $|y_j - y_i|=|y_i - y_j|$ and the expression in the last step is the first term in (\ref{gini-bowles}). This establishes the result.
\end{proof}

In a complete network each individual is connected to every other individual. Hence, every individual has the same number of connections and differ with each other only in terms of income levels. Thus, we can relabel them in increasing order of income levels without loss of generality.
\begin{theorem}\label{thm:kstar}
Consider a society of $n \geq 3$ individuals connected in a complete income network with income levels $y_i$, indexed by $i$ in nondecreasing order: $y_{i+1} \geq y_i$, for $i=1, 2, \ldots , n$. For this network, we have:
\begin{enumerate}
\item  $0 \leq I^{MSJ} = I^{BC} \leq 1$;
\item $I^{MSJ}/G=I^{BC}/G = 1/\left( 1 - (1/n)\right)$.
\end{enumerate}
\end{theorem}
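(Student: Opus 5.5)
By Lemma~\ref{lem:bc_msj_cnet}, $I^{MSJ}=I^{BC}$ already holds for a complete network, so it suffices to work with $I^{BC}$ throughout. I would prove part~2 first, since part~1 then follows from a known bound on $G$.

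For part~2, I compare (\ref{gini-std}) and (\ref{gini-bowles}) directly. The diagonal terms $|y_i-y_i|$ vanish and $|y_j-y_i|$ is symmetric, so the double sum over all ordered pairs equals twice the sum over unordered pairs:
\[
S \equiv \sum_{j=1}^{n}\sum_{i=1}^{n}|y_j-y_i| = 2\sum_{j=1}^{n-1}\sum_{i=j+1}^{n}|y_j-y_i|.
\]
This gives $G = S/(2n^2\bar y)$ and $I^{BC} = S/(2n(n-1)\bar y)$. Hence $I^{BC}/G = n^2/(n(n-1)) = n/(n-1) = 1/(1-(1/n))$, whenever $G>0$. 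The degenerate equal-income case $G=0$ makes both measures zero, and I would note it separately, reading the ratio statement as applying when incomes are not all equal.

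For part~1, nonnegativity is immediate from (\ref{gini-bowles}), since every term is nonnegative and $\bar y>0$. The upper bound is the only substantive step. I would use the ordering $y_1\le\cdots\le y_n$ to rewrite the unordered-pair sum as
\[
\sum_{i<j}(y_j-y_i)=\sum_{k=1}^{n}(2k-n-1)\,y_k .
\]
The coefficient $2k-n-1$ is at most $n-1$, with equality only at $k=n$, and every $y_k\ge0$. So the sum is at most $(n-1)\sum_k y_k = (n-1)n\bar y$. Substituting this into (\ref{gini-bowles}) gives
\[
I^{BC}\le \frac{2}{n(n-1)}\cdot (n-1)n\bar y\cdot\frac{1}{2\bar y}=1 .
\]
Equality holds exactly when all income goes to the richest individual. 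Equivalently, one could invoke the standard fact $G\le 1-1/n$ together with part~2.

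The main obstacle is this upper-bound step: getting the rank-weighted rewriting and the coefficient bound right, and making sure nonnegativity of incomes is actually used.
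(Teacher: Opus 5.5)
Your proposal is correct and follows essentially the same route as the paper: part~2 is the same ordered-versus-unordered pair count giving the ratio $n^2/(n(n-1))$, and your rank-weighted bound $\sum_k(2k-n-1)y_k\le(n-1)\sum_k y_k$ is the paper's reduction of the upper bound to $\sum_k (n-k)y_k\ge 0$ in slightly different form. Your caveat that the ratio in part~2 is undefined when all incomes are equal ($G=0$) is a fair refinement that the paper's proof passes over silently.
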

The proof is available in Appendix~\ref{app:proof-kstar}. Here let us discuss the implications of the results. The first part of Theorem~\ref{thm:kstar} shows that $I^{MSJ}$ and $I^{BC}$ have the desirable property of being bounded by the unit interval if we restrict our attention to complete networks. For other network structures, this is not necessarily true. In contrast to this, $I^{SBF}$ is always bounded by the unit interval, irrespective of the network structure of society.

The second result in Theorem~\ref{thm:kstar} shows that, in a complete network,  $I^{MSJ}$ and $I^{BC}$ are rescaled versions of the standard Gini coefficient, much like $I^{SBF}$. But there is an interesting difference between the two sets of inequality measures because the scaling factors are different. For the latter, we have, according to argument in \citet[pages~351--352]{stark_etal_2024}, $I^{SBF}=G/(n-1)$; for the former, we have $I^{MSJ}=I^{BC} = G/\left( 1 - (1/n)\right)$. Thus, when $n \to \infty$ (number of individuals in the network becomes very large), $I^{SBF} \to 0$ (because $G$ is always bounded by the unit interval), but $I^{MSJ}=I^{BC} \to G$.\footnote{In the proof in \citet[pages~351--352]{stark_etal_2024}, the formula for the Gini coefficient a different normalization than the standard one I have used in (\ref{gini-std}). If we use (\ref{gini-std}), then we get $I^{SBF}=G/(n - 2 + (1/n))$. The limiting argument that $G \to 0$ as $n \to \infty$ remains unchanged.}   

\subsection{Star network}
\begin{lemma}\label{lem:bc-msj-snet}
Consider a star network with any individual $i$, with $1 \leq i \leq n$, at the center and individuals $j \neq i$ in the periphery. For this network, we have $I^{MSJ}=I^{BC}$.
\end{lemma}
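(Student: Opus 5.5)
The plan is to compare only the first factor in (\ref{gini-bowles}) and (\ref{gini-mjs}). The other two factors, $1/\bar{y}$ and $1/2$, are identical in the two measures, which is the same strategy used for Lemma~\ref{lem:bc_msj_cnet}.

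The proof relies on reading $I^{BC}$ in its network form: its first factor is the arithmetic mean of $|y_j-y_i|$ over the distinct pairs that are directly connected, meaning pairs with $a_{ij}=1$ and $j>i$. In a complete network every pair is connected, and this reduces exactly to (\ref{gini-bowles}). I will state this reading explicitly, because the result depends on it. If instead the first factor of $I^{BC}$ averaged over all $n(n-1)/2$ pairs regardless of the network, the equality would fail for a generic income vector: it would require $\sum_{j\neq c}|y_j-y_c| = (2/n)\sum_{i<j}|y_j-y_i|$.

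Fix the center $c$ (the $i$ of the statement) and write $S=\sum_{j\neq c}|y_j-y_c|$.

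\textbf{First factor of $I^{MSJ}$.} I compute the neighborhood average differences from (\ref{eq:delta-defn}).
\begin{itemize}
\item Each peripheral individual $j\neq c$ has exactly one neighbor, namely $c$. Hence $\Delta_j=|y_j-y_c|$, and summing over the periphery gives $S$.
\item The center has $n-1$ neighbors. Hence $\Delta_c = S/(n-1)$.
\end{itemize}
Adding these and dividing by $n$ gives
\[
\frac{1}{n}\sum_{i=1}^n\Delta_i=\frac{1}{n}\Big(S+\frac{S}{n-1}\Big)=\frac{1}{n}\cdot\frac{nS}{n-1}=\frac{S}{n-1}.
\]

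\textbf{First factor of $I^{BC}$.} A star network has exactly $n-1$ edges, namely the pairs $\{c,j\}$ with $j\neq c$. The sum of absolute income differences over these edges is $S$, so the average is $S/(n-1)$.

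The two first factors therefore coincide. Multiplying each by $\frac{1}{\bar y}\frac12$ gives $I^{MSJ}=I^{BC}$. The argument never uses the ordering of incomes, so it holds whichever individual sits at the center.

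The computation itself is routine. The main obstacle is conceptual: pinning down the network version of $I^{BC}$ so that it agrees with (\ref{gini-bowles}) for complete networks and is well defined for the star.
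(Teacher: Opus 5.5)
Your proposal is correct and is essentially the paper's proof. Like the paper, you compute $\Delta_j=|y_j-y_c|$ on the periphery and $\Delta_c=S/(n-1)$ at the center, obtain $S/(n-1)$ as the first factor of $I^{MSJ}$, and match it with the star-network first factor of $I^{BC}$. The edge-average reading of $I^{BC}$ that you state explicitly is the one the paper uses implicitly here, in (\ref{bc-gen}), and in the level-$2$ computation with denominator $2n-3$, so flagging it clarifies the paper's argument rather than departing from it.
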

\begin{proof}
For individuals indexed by $j \neq i$, i.e. those who are in the periphery, using the definition of neighborhood average difference in (\ref{eq:delta-defn}), we have
\[
\Delta_j = |y_j - y_i|,
\]
and for the $i$-th individual in the center, we have
\[
\Delta_i = \frac{1}{n-1} \sum_{j=1}^{n} |y_j - y_i|. 
\]
Hence, the first term in (\ref{gini-mjs}) is
\begin{equation}\label{bc-mjs-snet}
\frac{1}{n} \sum_{i=1}^{n} \Delta_i = \frac{1}{n} \left[ \sum_{j=1}^{n} |y_j - y_i| + \frac{1}{n-1} \sum_{j=1}^{n} |y_j - y_i|\right] = \frac{1}{n} \left[ \frac{n}{n-1} \sum_{j=1}^{n} |y_j - y_i|\right] =  \frac{1}{n-1} \left[ \sum_{j=1}^{n} |y_j - y_i|\right]
\end{equation}
where the expression in the last step is the first term in (\ref{gini-bowles}) for a star network with the $i$-th individual at the center. This establishes the result.
\end{proof}

To see that for network structures other than a complete network, $I^{BC}$ and $I^{MSJ}$ can be larger than $1$, we can simply refer to the fact noted by \citet{MAMUNURU2025106799} that $I^{MSJ}$ lies in the interval $[0,n/2]$, where $n$ is the number of individuals in the network. We can be more specific and give a simple counter example of a star network to understand precisely when $I^{MSJ}$ becomes larger than $1$.
\begin{proposition}\label{prop-msj-greater-1}
Consider a star network organization of a society of $n>3$ persons where the richest person is at the central node. If the richest person's income level, $y_n$, is more than $(3n-2)/(n^2-3n+2)$ times the sum of the income of all the other persons in this society, then $I^{MSJ} = I^{BC} > 1$.
\end{proposition}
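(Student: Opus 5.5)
The plan is a direct computation built on Lemma~\ref{lem:bc-msj-snet}, which already gives $I^{MSJ}=I^{BC}$ for any star network. It therefore suffices to compute $I^{BC}$ (equivalently $I^{MSJ}$) explicitly when the richest individual $n$ is at the center, and then turn the inequality $I^{BC}>1$ into a condition on $y_n$.

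First, I will use the last expression in (\ref{bc-mjs-snet}) with the center indexed by $n$. The first factor is $\frac{1}{n-1}\sum_{j=1}^{n}|y_j-y_n|$. Because $y_n \ge y_j$ for every $j$, the absolute values can be dropped. Writing $S=\sum_{j=1}^{n-1} y_j$ for the total income of everyone other than the center, this factor becomes
\[
\frac{(n-1)y_n - S}{n-1}.
\]
The mean income is $\bar y=(y_n+S)/n$. Substituting both into (\ref{gini-mjs}) gives the closed form
\[
I^{MSJ}=I^{BC}=\frac{n\left[(n-1)y_n - S\right]}{2(n-1)(y_n+S)}.
\]

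Next, I will impose $I^{BC}>1$. Since $y_n+S>0$ (at least one income is positive) and $n-1>0$, I can cross-multiply without changing the direction of the inequality. This gives
\[
n(n-1)y_n - nS > 2(n-1)y_n + 2(n-1)S.
\]
Collecting terms yields
\[
(n-1)(n-2)\,y_n > (3n-2)\,S,
\]
that is,
\[
y_n(n^2-3n+2) > (3n-2)S.
\]
For $n>3$ (indeed already for $n>2$) the coefficient $n^2-3n+2$ is strictly positive, so I can divide by it. The result is $y_n > \frac{3n-2}{n^2-3n+2}\,S$, which is exactly the hypothesis. Each step is an equivalence, so the hypothesis implies $I^{MSJ}=I^{BC}>1$.

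There is no real obstacle; the only points needing care are sign bookkeeping. First, the absolute values can be removed only because the center is the richest individual. Second, the divisions by $y_n+S$ and by $n^2-3n+2$ are legitimate only because both quantities are positive. I will also note that the degenerate case $S=0$ is covered by the same argument, since it only requires $y_n>0$.
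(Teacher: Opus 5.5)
Your proof is correct and follows the paper's route. You evaluate the star-network expression (\ref{bc-mjs-snet}) with the richest person at the center, reduce $I^{BC}>1$ to $(n-1)(n-2)y_n>(3n-2)\sum_{i=1}^{n-1}y_i$, and invoke Lemma~\ref{lem:bc-msj-snet} for $I^{MSJ}=I^{BC}$; the paper differs only cosmetically, passing through the intermediate form $y_n>2\bar{y}+(n-1)^{-1}\sum_{i=1}^{n-1}y_i$ where you cross-multiply a closed form.
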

\begin{proof}
Since the richest person is at the central node, computing the expression in (\ref{bc-mjs-snet}) with $i=n$, we see that if $y_n> 2\bar{y} + (n-1)^{-1} \sum_{i=1}^{n-1}y_i$, then $I^{BC} \left( 1 \right)>1$. With a little algebraic manipulation, this can be equivalently written as: if $y_n>((3n-2)/(n^2-3n+2))\sum_{i=1}^{n-1}y_i$, then $I^{BC}>1$. Since, as we saw in Lemma~\ref{lem:bc-msj-snet}, $I^{MSJ} = I^{BC} $ for a star network, the result is established. \end{proof}

Proposition~\ref{prop-msj-greater-1} shows that if the distribution of income is very skewed at the very top and society has a star network organization with the richest person at the center, then $I^{MSJ} = I^{BC} > 1$. Using this proposition, one can construct the following example of a $4$ person society with the following income levels, $1,2,4,12$, and with the richest person at the central node of a star network. In this case, $I^{MSJ} = I^{BC} =1.018$. 

\subsection{A network that is neither complete nor a star}
As an example of a case when the two measures, $I^{MSJ}, I^{BC} $, produce different values, consider a network where the adjacency matrix $A$ in (\ref{adj-matrix}) has $a_{12}=a_{21}=a_{jj}=0$, for $j=1, 2, \ldots, n$, and all other elements of $A$ are $1$. That is, individuals $1$ and $2$ are not directly connected to each other; all other network connections that obtain in a complete network are present. In this case,
\begin{equation}\label{bc-gen}
I^{BC} = \frac{1}{\frac{n(n-1)}{n} - 1} \left[ \sum_{i=3}^{n} |y_1 - y_i| + \sum_{i=3}^{n} |y_2 - y_i|  + \sum_{j=3}^{n-1} \sum_{i=j+1}^{n} |y_j - y_i|\right] \frac{1}{2 \bar{y}}.
\end{equation}
To compute $I^{MSJ}$, we need to compute the neighborhood average differences. Using (\ref{eq:delta-defn}), we get
\begin{align*}
\Delta_1 & = \frac{1}{n-2} \sum_{i=3}^{n} |y_1 - y_i| \\
\Delta_2 & = \frac{1}{n-2} \sum_{i=3}^{n} |y_2 - y_i| \\
\Delta_j & = \frac{1}{n-1} \sum_{i=1}^{n} |y_j - y_i|, \quad j=3, 4, \ldots, n. 
\end{align*}
Thus,
\begin{equation}\label{msj-gen}
I^{MSJ} = \frac{1}{n} \left[ \frac{1}{n-2} \sum_{i=3}^{n} |y_1 - y_i| + \frac{1}{n-2} \sum_{i=3}^{n} |y_2 - y_i|  + \frac{1}{n-1}  \sum_{j=3}^{n} \sum_{i=1}^{n} |y_j - y_i|\right] \frac{1}{2 \bar{y}}.
\end{equation}
In general, (\ref{bc-gen}) and (\ref{msj-gen}) are not equal. For network structures that are neither complete nor star networks, $I^{BC}$ and $I^{MSJ}$ are \textit{generally different}.

\section{Segregation Monotonicity and Network Inequality}\label{sec:star-k}

\subsection{Definitions}
I start by precisely defining two important concepts: segregation monotonicity and a level-$k$ star network. These two concepts are used to present the main results of this paper later in this section.

\begin{definition}[Segregation monotonicity]
Let $Y = \left( y_1, y_2, \ldots, y_n \right)$ denote an income distribution with at least one income level strictly positive; let $\mathcal{G}$ denote a network formed by these $n$ individuals. A network inequality measure $\mathcal{I} \left(Y,\mathcal{G} \right)$ satisfies segregation monotonicity if $\mathcal{I} \left(Y,\mathcal{G}' \right) \geq \mathcal{I} \left(Y,\mathcal{G} \right)$ whenever $\mathcal{G}'$ is more segregated than $\mathcal{G}$.
\end{definition}

That is, a network inequality measure satisfies segregation monotonicity if, holding the distribution of income fixed, its value does not decrease when the network undergoes a transformation that increases social segregation. In this paper, I consider a particular form of increasing segregation represented by a movement from what I call a level-$k$ star network to a level-$(k-1)$ star network. Under this transformation, relatively poorer individuals become increasingly isolated from one another, while their remaining social comparisons become increasingly concentrated among richer individuals. 

%
%

\begin{definition}\label{def:k-star-network}
Consider $n$ individuals in a social network. For $1 \leq k \leq n-1$, a level-$k$ star network obtains if the $n$ individuals are indexed by $i =1, 2, \ldots, n$, in nondecreasing order of income, $y_1 \leq y_2 \ldots \leq y_{n-1} \leq y_n$ and have the following pattern of connections: 
\begin{enumerate}
\item The poorest $n-k$ individuals, indexed by $i \in \left\lbrace 1, 2, \ldots, n-k\right\rbrace$, are isolated from each other.
\item The $k$ richest individuals, indexed by $i \in \left\lbrace n-k+1, n-k+2, \ldots, n\right\rbrace$, are connected to each other and to the poorest $n-k$ individuals.
\end{enumerate}
\end{definition}

\begin{figure}[htbp]
  \centering
  
  \begin{subfigure}[b]{0.31\textwidth}
    \centering
    \includegraphics[width=\textwidth]{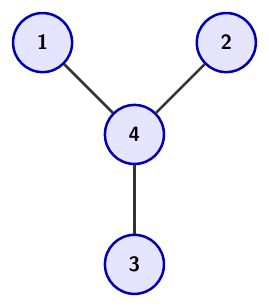}
    \caption{Level-1 Star}
    \label{fig:star-level1}
  \end{subfigure}
  \hfill 
  \begin{subfigure}[b]{0.31\textwidth}
    \centering
    \includegraphics[width=\textwidth]{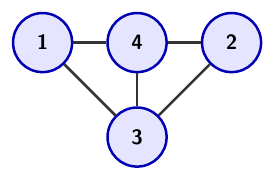}
    \caption{Level-2 Star}
    \label{fig:star-level2}
  \end{subfigure}
  \hfill
  \begin{subfigure}[b]{0.31\textwidth}
    \centering
    \includegraphics[width=\textwidth]{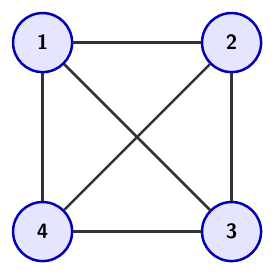}
    \caption{Level-3 Star}
    \label{fig:star-level3}
  \end{subfigure}

  \caption{Comparison of star networks across hierarchy levels.}
  \label{fig:star-networks-comparison}
\end{figure}

Let me flesh out the details of Definition~\ref{def:k-star-network} by considering various possibilities. A level-$1$ network in definition~\ref{def:k-star-network} is a standard star network with the richest person in the center, who is connected to all the other $(n-1)$ persons, while the $(n-1)$ poorer persons are isolated from each other. Thus, the richest person has the largest number of connections; she is the best connected individual in this society. This is a two class society, where the privileged class has only one member. An example of this network for a $4$ person society is given in Figure~\ref{fig:star-level1}.

A level-$2$ network in definition~\ref{def:k-star-network} represents a 2-class society, where the two richest persons belong to the privileged class. They are in the center (and connected to all the others) of the social network, while the other $(n-2)$ poorer persons are all isolated from each other (and all belong to the second class). This is a two class society, where the privileged class has $2$ members. An example of this network for a $4$ person society is given in Figure~\ref{fig:star-level2}.

For any $k$ between $1$ and $n-1$, we can construct a level$-k$ star network using the following recursion: the level-$k$ star network is the union of the level-$(k-1)$ star network and new connections (edges) formed between the $(n-k+1)$-th individual with her poorer counterparts. This is a two class society, where the privileged class has $k$ members. Finally, for $k=(n-1)$, we get the complete (fully connected) network, where every person is connected to every other person. Here, everyone is equal in terms of the number of connections. Thus, everyone belongs to the same class. An example of this network for a $4$ person society is given in Figure~\ref{fig:star-level3}.

\subsection{Intuition}
The intuition underlying the concept of a level-$k$ star network comes from my engagement with previous literature. For instance, in emphasizing the merits of $I^{BC}$, \citet{BOWLES2020108789} discuss a three-person example with wealth levels of $10,4,3$. If we compare the situation of these three individuals in a complete network with another one in which the richest person is in the center with the other two persons isolated from each other, we would see that $I^{BC}$ increases from $0.41$ to $0.57$ with no change in the wealth of the three individuals \citep{BOWLES2020108789}. 

This example captures an important intuitive understanding of `experienced inequality' (which $I^{BC}$ is meant to measure). When, in the three-person example, the complete network is transformed into a star network with the richest individual at the center, the average income level of comparators (only one person in this example) of the two poorer individuals rise. Hence, inequality experienced by them increase. There is of course no change in the experience of inequality for the richest person because her comparators do not change. Since $I^{BC}$ is an aggregate of experienced inequality across all individuals, it rises.

The star network structure of this example captures the intuitive idea that richer individuals or households might be advantaged in terms of having larger number of connections (with all the benefits that come from better or larger social connections). That is, in addition to having larger income or wealth, richer individuals or households might also have better or more social connections (and therefore more social capital). Relatively poorer individuals have fewer social connections, especially with other poor individuals; they are relatively more isolated than richer individuals. When such isolation increases, society becomes more segregated. The intuition is that, holding the income distribution fixed, greater isolation of poorer individuals changes the composition of their socially relevant comparisons toward richer individuals. 


Definition~\ref{def:k-star-network} formalizes this intuition for a general $n$-person network in a specific manner where, by construction, I keep individual income levels fixed but allow relatively poorer individuals to becomes increasingly more isolated, i.e. society becomes increasingly segregated. Thus, the \textit{specific type of network transformation} captured by a transition from a level-$k$ to a level-$(k-1)$ network allows me to compare the two approaches to measuring inequality in terms of how well they capture our intuition about increasing inequality as networks become more segregated keeping income distribution unchanged.

\subsection{Implementation}
To compute the measures of inequality in a level-$k$ star network, we need to understand the specific structures of the matrices $A, D$ and $M$ for this network architecture. In this case, the adjacency matrix, $A$, defined in (\ref{adj-matrix}) becomes, 
\begin{equation}\label{adj-matrix-kstar}
a_{ij} = 1 \text{ if and only if } i \neq j \text{ and at least one of } i,j \text{ belongs to the set of richest } k \text{ individuals}. 
\end{equation}
This follows from the requirement in Definition~\ref{def:k-star-network} that the poorest $n-k$ are isolated from each other but connected to the richer $k$ individual (who are themselves connected with each other). For a level-$k$ star network, the absolute difference matrix, $D$, defined in (\ref{diff-matrix}), becomes 
\begin{equation}\label{diff-matrix-kstar}
d_{ij} = d_{ji} = 
\begin{cases}
y_j - y_i  \quad \text{ if } j > i  \\
0 \quad \text{ otherwise},
\end{cases}
\end{equation}
where I can remove the absolute values because the income levels are nondecreasing in the index. Thus, $D$, remains symmetric with zero principal diagonal but the elements no longer have absolute values. Finally, for a level-$k$ star network, the positive difference matrix, $P$, defined in (\ref{pos-diff-matrix}), becomes lower triangular,
\begin{equation}\label{pos-diff-matrix-kstar}
p_{ij} = 
\begin{cases}
y_i - y_j  \quad \text{ if } i > j  \\
0 \quad \text{ otherwise},
\end{cases}
\end{equation}
once again, because the income levels are nondecreasing in the index.

In the next two sub-sections, I present the main results of this paper. These results examine whether alternative measures of inequality satisfy segregation monotonicity with respect to the specific network transformation represented by a movement from a level-$k$ star to a level-$(k-1)$ star network. For presenting these results, I will use the following notation: for $k=1, 2, \ldots, n-1$, $I^{SBF} \left(k \right)$ and $I^{MSJ} \left(k \right)$ denote, respectively, $I^{SBF}$  and $I^{MSJ}$ in a level-$k$ star network.

\subsection{Relative deprivation-based approach}

\begin{theorem}[Monotonicity]\label{thm:sbf-snet}
Consider an income network with $n$ individuals. Keeping individual income levels fixed, if society moves from a level-$k$ to a level-$(k-1)$ star network organization, then $I^{SBF} \left(k - 1 \right) \geq I^{SBF} \left(k \right)$, for $k=2, 3, \ldots, n-1$.
\end{theorem}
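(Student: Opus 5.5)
The plan is to compare the two networks individual by individual. The normalizing factor $1/\left((n-1)\sum_i y_i\right)$ in (\ref{gini-sbf}) depends only on incomes, which are held fixed, so it suffices to show $\sum_i RD_i$ weakly increases. I will in fact aim for the stronger claim that every $RD_i$ weakly increases when moving from level $k$ to level $k-1$.

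\emph{Step 1: compute $RD_i$ in a level-$k$ star.} Write $R_k=\{n-k+1,\ldots,n\}$ for the set of the $k$ richest individuals, and use the adjacency structure (\ref{adj-matrix-kstar}) together with the lower-triangular $P$ in (\ref{pos-diff-matrix-kstar}).
\begin{itemize}
\item A poor individual $i\le n-k$ has neighbor set exactly $R_k$, all weakly richer, so $RD_i(k)=\frac{1}{k}\sum_{j\in R_k}(y_j-y_i)=\mu_k-y_i$. Here $\mu_k$ is the mean income of $R_k$.
\item A rich individual $i\in R_k$ is connected to all $n-1$ others. Only those with larger index contribute, so $RD_i(k)=\frac{1}{n-1}\sum_{j>i}(y_j-y_i)$.
\end{itemize}

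\emph{Step 2: the transition.} Moving to level $k-1$ only demotes individual $m=n-k+1$ from the center to the periphery; the edges between $m$ and the other poor individuals are removed. I will check the three groups separately.
\begin{itemize}
\item For $i>m$, the individual is in $R_{k-1}$ and is still connected to everyone, so $RD_i$ is unchanged.
\item For $i=m$, the value changes from $\frac{1}{n-1}\sum_{j>m}(y_j-y_m)$ to $\frac{1}{k-1}\sum_{j>m}(y_j-y_m)$. The new neighbors are exactly $R_{k-1}=\{j>m\}$. Since $1\le k-1< n-1$ and the sum is nonnegative, this is a weak increase.
\item For $i<m$, the value changes from $\mu_k-y_i$ to $\mu_{k-1}-y_i$. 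Because $R_{k-1}=R_k\setminus\{m\}$ and $y_m=\min_{j\in R_k}y_j$, removing the smallest element weakly raises the mean, so $\mu_{k-1}\ge\mu_k$.
\end{itemize}
Summing over all $i$ gives $\sum_i RD_i(k-1)\ge\sum_i RD_i(k)$, and hence $I^{SBF}(k-1)\ge I^{SBF}(k)$.

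The only step needing care is the poor-individual comparison, i.e.\ verifying $\mu_{k-1}\ge\mu_k$. This follows from $\mu_k=\frac{(k-1)\mu_{k-1}+y_m}{k}$ together with $y_m\le\mu_{k-1}$. I would also confirm the edge case $k=2$: the target is the ordinary star, and the formulas above still apply with $k-1=1$.
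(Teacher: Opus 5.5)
Your proposal is correct and follows essentially the same route as the paper. Both proofs note that the normalizing denominator is unchanged. Both then compare $RD_i$ individual by individual: the poor gain $\mu_{k-1}-\mu_k\ge 0$, individuals above $n-k+1$ are unaffected, and the demoted individual $n-k+1$ weakly gains. For that individual you factor out the common nonnegative sum $\sum_{j>m}(y_j-y_m)$ and compare $1/(k-1)$ with $1/(n-1)$; the paper reduces the same inequality to $y_{n-k+1}\le\mu_{k-1}$, which is equivalent. Your general formula $\frac{1}{n-1}\sum_{j>i}(y_j-y_i)$ for the rich also sidesteps the paper's slip in writing $RD^{(k)}_{n-1}$.
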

The proof of the theorem is given in Appendix~\ref{app:proof-sbf-snet}. The theorem says that, with individual income levels remaining fixed, if society becomes more segregated in the sense of moving from a level-$k$ to a level-$(k-1)$ star network organization, then the value of  $I^{SBF}$ registers an increase. Moreover, this is true for every $k=2, \ldots, n-1$. As an example, if society moves from a complete network organization (level-$(n-1)$ star network) to an organization where the two poorest persons are isolated from each other keeping all other connections intact, the value of $I^{SBF}$ increases. At the other extreme, if society moves from a level-$2$ star network (where the two richest individuals are connected to all others, while the poorest $(n-2)$ individuals are isolated from each other) to a level-$1$ star network (where only the richest individual is connected to all others, while the poorest $(n-1)$ individuals are isolated from each other), then $I^{SBF}$ registers an increase. Our intuition suggests that, in general, a movement from level-$k$ to a level-$(k-1)$ star network increases isolation of the relatively poor. If increased segregation of this form is regarded as increasing inequality, then a measure satisfying segregation monotonicity should increase under this transformation. Theorem~\ref{thm:sbf-snet} shows that $I^{SBF}$ satisfies segregation monotonicity with respect to this transformation.

\subsection{Total experience-based approach}
I now construct a counterexample to show that $I^{BC}$ and $I^{MSJ}$ do not generally satisfy segregation monotonicity with respect to this transformation. In particular, I show that if income levels are not too skewed at the very top end of the income distribution, moving from a level-$2$ to a level-$1$ star network organization of society keeping individual income level unchanged, i.e. from a relatively less to a relatively more segregated society, can reduce the value of $I^{BC}$ and $I^{MSJ}$. 

\begin{theorem}[Lack of monotonicity]\label{thm:counter-int}
Suppose $y_1 \leq y_2 \leq \cdots \leq y_{n-1} \leq y_n$ denotes the income levels of the $n$ individuals in a society. 
\begin{enumerate}
\item The condition 
\[
y_{n-1} \leq y_n \leq \left( \frac{n}{n-1} \right) y_{n-1} - \frac{1}{(n-1)(n-2)} \sum_{i=1}^{n-2} y_i 
\]
is necessary and sufficient for $I^{BC} \left( 2 \right) \geq I^{BC} \left( 1 \right)$.

\item The condition 
\[
y_{n-1} \leq y_n \leq \left( \frac{n^2+n-6}{n^2-n-2} \right) y_{n-1} - \frac{2}{n^2-n-2} \sum_{i=1}^{n-2} y_i 
\]
is necessary and sufficient for $I^{MSJ} \left( 2 \right) \geq  I^{MSJ} \left( 1 \right)$. 

\end{enumerate}
\end{theorem}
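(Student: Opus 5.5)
The plan is a direct computation. For $k=1$ and $k=2$ I will write out the first (bracketed) factor of $I^{BC}$ and of $I^{MSJ}$ explicitly. The other two factors, $1/\bar{y}$ and $1/2$, are the same across the two networks because incomes are held fixed. So each inequality $I(2)\geq I(1)$ is equivalent to the corresponding inequality between first factors. I will use the ordering $y_1\le\cdots\le y_n$ to drop absolute values, as in (\ref{diff-matrix-kstar}), and write $S=\sum_{i=1}^{n-2}y_i$. I assume $n\ge 3$ so that the level-$2$ star differs from the complete network and the denominators $n-2$ are positive.

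\textbf{Part 1 ($I^{BC}$).} I read the first factor of $I^{BC}$ on a network as the mean absolute difference over the edges of the network. This is consistent with the way Lemma~\ref{lem:bc-msj-snet} identifies it for a star network.
\begin{itemize}
\item For the level-$1$ star, by (\ref{bc-mjs-snet}) with $i=n$, this factor is $\bigl((n-1)y_n-y_{n-1}-S\bigr)/(n-1)$.
\item For the level-$2$ star, there are $2(n-2)+1=2n-3$ edges. The edges $(i,n)$ with $i\le n-2$ contribute $(n-2)y_n-S$. The edges $(i,n-1)$ with $i\le n-2$ contribute $(n-2)y_{n-1}-S$. The edge $(n-1,n)$ contributes $y_n-y_{n-1}$. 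The total is $(n-1)y_n+(n-3)y_{n-1}-2S$, divided by $2n-3$.
\end{itemize}
I will then cross-multiply by the positive quantity $(2n-3)(n-1)$ and collect coefficients on $y_n$, $y_{n-1}$ and $S$. This should give exactly $(n-1)(n-2)y_n\le n(n-2)y_{n-1}-S$, which is the stated upper bound on $y_n$. The lower bound $y_{n-1}\le y_n$ is just the standing ordering assumption. Every step is a reversible manipulation with positive multipliers, so the condition is both necessary and sufficient.

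\textbf{Part 2 ($I^{MSJ}$).} I will compute the neighborhood average differences (\ref{eq:delta-defn}) in each network.
\begin{itemize}
\item In the level-$1$ star, each $i\le n-1$ has $\Delta_i=y_n-y_i$, and $\Delta_n=\bigl((n-1)y_n-y_{n-1}-S\bigr)/(n-1)$.
\item In the level-$2$ star, each poor $i\le n-2$ has two neighbors, so $\Delta_i=(y_{n-1}+y_n)/2-y_i$. Node $n$ is connected to everyone, so $\Delta_n$ is unchanged from the level-$1$ star. Node $n-1$ is also connected to everyone, so $\Delta_{n-1}=\bigl((n-2)y_{n-1}-S+y_n-y_{n-1}\bigr)/(n-1)$.
\end{itemize}
Because $\Delta_n$ cancels, the comparison of $\frac1n\sum_i\Delta_i$ reduces to comparing $\sum_{i\le n-1}\Delta_i$ across the two networks.
\begin{itemize}
\item Level $1$: $(n-1)y_n-y_{n-1}-S$.
\item Level $2$: $(n-2)(y_{n-1}+y_n)/2-S+\bigl((n-3)y_{n-1}+y_n-S\bigr)/(n-1)$.
\end{itemize}
The $-S$ terms partly cancel. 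I will multiply through by $2(n-1)$ and collect terms, aiming to obtain $(n^2-n-2)y_n\le(n^2+n-6)y_{n-1}-2S$. This matches the statement once I note that $n^2-n-2=(n-2)(n+1)>0$ for $n\ge3$. The positivity again makes the equivalence two-sided.

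The only real obstacle is bookkeeping. I need to get the correct edge count and neighborhood sizes in the level-$2$ star, and track the $S$ coefficients through the cross-multiplication without sign slips. The substantive step is recognizing that $\Delta_n$, and the common factors $1/\bar y$ and $1/2$, cancel from both comparisons.
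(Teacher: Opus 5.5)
Your proposal is correct and follows essentially the paper's own proof. You compute the first factor of each measure on the level-$1$ and level-$2$ stars (reading $I^{BC}$ as the average over the $2n-3$ edges, as the paper does), cancel $1/\bar y$ and $1/2$, and rearrange with positive multipliers. The differences are cosmetic: you cancel the common $\Delta_n$ in Part~2 instead of computing $F^{MSJ}_2$ in full, your total $(n-1)y_n+(n-3)y_{n-1}-2S$ has the correct $+$ sign on $y_{n-1}$ that display (\ref{eq:bc_f_n2}) misprints, and for $n=3$ the level-$2$ star \emph{is} the complete network (harmless, since your computation covers that case).
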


The proof of the theorem is given in Appendix~\ref{app:proof-thm-cint}. Here, let us discuss the intuition. The theorem derives us an upper bound on the highest income level as a function of the other income levels in society for which the total experienced-based measures violate segregation monotonicity. Thus, they show that as long as income levels are not too skewed at the very top end of the income distribution, moving from a level-$2$ to a level-$1$ star network organization of society while individual income levels remain fixed, i.e. from a relatively less segregated to a relatively more segregated society, can reduce the value of $I^{BC}$ and $I^{MSJ}$. This is in contrast to the relative deprivation-based approach, where an increase in segregation, keeping individual income levels fixed, registers an increase in $I^{SBF}$, irrespective of income distribution.

Using the results of Theorem~\ref{thm:counter-int}, we can construct the following example of a $5$ person society, where income levels are: $1,2,4,50,55.96$. In this case, implementing (\ref{gini-bowles}), we get $I^{BC} \left(4 \right) = 0.6990$, $I^{BC} \left(3 \right) = 0.7742$, $I^{BC} \left(2 \right) = 0.9796$ and $I^{BC} \left(1 \right) = 0.9231$. Thus, keeping individual income levels unchanged, as the society moves from the complete network to the most segregated society, the value of $I^{BC}$ first rises  and then falls. The movement from the $k=2$ to the $k=1$ case leads to a \textit{decline} in $I^{BC}$ from $0.980$ to $0.923$, directly contradicting the intuitive expectation that greater network segregation increases experienced inequality. 

For the same example $5$ person society, let us now implement (\ref{gini-mjs}). Now we get $I^{MSJ} \left(4 \right) = 0.6990$, $I^{MSJ} \left(3 \right) = 0.7746$, $I^{MSJ} \left(2 \right) = 1.022$ and $I^{MSJ} \left(1 \right) = 0.9231$. Once again, keeping individual income levels unchanged, as the society moves from the complete network to the most segregated society, the value of $I^{MSJ}$ first rises  and then falls. The movement from the $k=2$ to the $k=1$ case leads to a \textit{decline} in $I^{MSJ}$ from $1.022$ to $0.923$. Thus, this example does not merely show that the measures fail to increase strictly; it shows that they can decrease under an increase in segregation.

\section{Conclusion}\label{sec:conclusion}
Measures of inequality that are relevant for networks are functions of income distribution \textit{and} network structure. In this paper, I have proposed segregation monotonicity as a criterion by which such measures of inequality can be evaluated and compared. If a measure of inequality satisfies segregation monotonicity then its value should weakly rise as a network becomes more segregated in some well-defined sense with income distribution remaining unchanged. 

The central contribution of this paper is to compare two recently developed approaches to constructing measures of inequality in networks using the criterion of segregation monotonicity. The first, total experience-based, approach developed by \citet{BOWLES2020108789} and extended by \citet{MAMUNURU2025106799}, measures inequality by aggregating absolute income differences across all network neighbors. The second, relative deprivation-based, approach developed by \citet{stark_etal_2024}, by contrast, focuses on upward comparisons: individuals experience deprivation only relative to directly connected individuals with higher incomes. 

To compare the two approaches, I introduced the class of level-$k$ star networks. Moving from a higher-level to a lower-level network increases the isolation of relatively poorer individuals and concentrates their comparisons among relatively richer individuals. This provides a precise way of representing an increase in social segregation without changing any individual's income.

For this class of network transformations, the relative deprivation-based measure satisfies segregation monotonicity. As the network becomes more segregated, the value of the measure weakly increases. The reason is straightforward: poorer individuals compare themselves with a progressively more affluent set of neighbors, while the income distribution itself remains fixed. The relative deprivation-based measure therefore tracks the intuition that increased isolation of poorer individuals and increased concentration of their social comparisons among richer individuals constitute an increase in inequality.

The total experience-based measures do not possess this property in general. The counterexamples developed in this paper show that, when income is not too strongly concentrated at the very top, a movement from a less segregated to a more segregated level-$k$ star network can reduce the values of both the Bowles-Carlin and Mamunuru-Shrivastava-Jayadev measures. Thus, greater segregation need not be registered as greater inequality by measures based on the aggregate of absolute differences among connected individuals. 

This divergence arises because the two approaches respond differently to the composition of socially relevant comparisons. The total experience-based approach treats upward and downward differences symmetrically, whereas the relative deprivation-based approach gives special significance to upward comparisons. The comparison of these approaches shows that the choice between them is not merely a matter of selecting different formulas for measuring the same object. Rather, the measures embody different conceptions of how social structure, in the form of a network, informs the meaning of inequality. 

An additional set of results presented in this paper concerns the properties of the total experience-based measures. I showed that the Bowles-Carlin measure and the Mamunuru-Shrivastava-Jayadev measure coincide in two important network structures: complete networks and star networks. In a complete network, the two measures are also rescaled versions of the standard Gini coefficient and are bounded by the unit interval. These results clarify the relationship between the standard Gini coefficient and its network-based reformulations. At the same time, the results also highlight an important limitation of the total experience-based measures: outside complete networks their values need not be bounded by one and can depend substantially on the particular architecture of social connections.

The results in this paper have a broader implication for the measurement of inequality in networks. Once social structure is allowed to determine `who compares with whom,' the question of which comparisons should matter becomes inseparable from the normative and conceptual definition of inequality itself. A measure based on absolute differences answers the question of how unequal individuals are relative to their social contacts in both directions. A relative deprivation measure answers a different question: how much disadvantage individuals experience relative to better-off members of their social environment. Neither perspective can therefore be evaluated solely by the conventional axioms of inequality measurement; their suitability also depends on the type of social comparison that the measure is intended to represent.

The results also suggest that network segregation should receive greater attention in the analysis of inequality. Two societies can have exactly the same distribution of income but differ substantially in the social environments through which economic differences are observed and experienced. If poorer individuals become increasingly isolated from one another while remaining connected to richer individuals, the pattern of experienced economic differences changes even though conventional distributional inequality remains unchanged. A satisfactory theory of inequality in social networks must therefore specify not only how income differences are aggregated but also how changes in the structure of social comparison should affect the resulting measure.

The analysis in this paper is limited to a particular and deliberately simple class of segregated networks. This simplicity is useful because it isolates the role of network architecture from changes in the income distribution. An important direction for future research is to investigate whether the divergence identified here between total experience-based and relative deprivation-based indexes of inequality extends to broader classes of networks and to characterize the network transformations under which alternative measures of inequality satisfy segregation monotonicity. Another important direction is empirical. The increasing availability of data containing both economic characteristics and social connections makes it possible to investigate whether different network-based measures produce systematically different assessments of inequality in actual societies.

The main conclusion of the paper is therefore that measuring inequality in a network requires more than restricting the set of interpersonal comparisons. It also requires a theory of which comparisons matter and how changes in the structure of those comparisons should be interpreted. The relative deprivation-based and total experience-based approaches offer two distinct answers to that question. The comparison developed here shows that this distinction has substantive consequences: when social networks become increasingly segregated while incomes remain unchanged, the measures can move in opposite directions. Understanding this divergence is perhaps important for the development of a coherent theory of inequality in socially structured populations.

\appendix

\counterwithin{equation}{section}

\section{Proofs of theorems}

\subsection{Proof of Theorem~\ref{thm:kstar}}\label{app:proof-kstar}
\textit{Part~1.} The lower bound holds because each term in (\ref{gini-bowles}) and (\ref{gini-mjs}) are nonnegative. To see the upper bound, let us write out the sum for the right most expression in (\ref{bc_msj_cnet}):
\[
I^{BC} = \left[ \frac{2 \left[ (n-1)y_n + (n-2)y_{n-1} + \cdots + 3 y_4 + 2 y_3 + y_2 \right] - n(n-1)\bar{y}}{\frac{n(n-1)}{2}} \right] \frac{1}{2 \bar{y}}.
\]
To show that this ratio is weakly less than $1$, we need to show that
\[
2\left[ (n-1)y_n + (n-2)y_{n-1} + \cdots + 2 y_3 + y_2 \right] - n(n-1)\bar{y} \leq n(n-1)\bar{y}.
\]
Moving $n(n-1)\bar{y}$ to the right hand side and using $n \bar{y}=(y_1 + y_2 + \cdots + y_n)$, we see that the above will be established if we have
\[
\left[ (n-1)y_n + (n-2)y_{n-1} + \cdots + 2 y_3 + y_2 \right] \leq (n-1) (y_1 + y_2 + \cdots + y_n).
\]
Collecting terms, we see that the above holds because
\[
y_{n-1} + 2 y_{n-2} + \cdots + (n-2) y_2 + (n-1) y_1 \geq 0,
\]
where the last step holds since $y_i \geq 0$. This establishes that  $I^{BC} \leq 1$. Moreover, we have seen in Lemma~\ref{lem:bc_msj_cnet} that for a complete network, $I^{BC}=I^{MSJ}$. Hence, we get: $I^{MSJ} = I^{BC} \leq 1$.

\textit{Part~2.} The expression for the Gini coefficient can be written as
\[
G = \frac{1}{2n^2 \bar{y}} \sum_{j=1}^{n} \sum_{i=1}^{n} |y_j - y_i| = \frac{2}{2n^2 \bar{y}} \sum_{j=1}^{n-1} \sum_{i=j+1}^{n} |y_j - y_i|,
\]
because $|y_j - y_i|=|y_i - y_j|$. Using the expression for $I^{BC}$ in (\ref{gini-bowles}) and dividing with the above version of the Gini coefficient, we get
\[
\frac{I^{BC}}{G} = \frac{n^2}{n^2 - n} = \frac{1}{1 - (1/n)}
\]
which completes the proof.

\subsection{Proof of Theorem~\ref{thm:sbf-snet}}\label{app:proof-sbf-snet}
The strategy of the proof is the following: in the first step, I will compute the sum of the relative deprivation measure for the $n$ individuals when society is organized as a level-$k$ star network (I will identify the level of the network with a superscript); in the second step, I will compute the sum of the relative deprivation measure for the $n$ individuals when society is organized as a level-$(k-1)$ star network; in the third step, I will take the difference between the sums in the second and first steps and show that the result is nonnegative.

\textit{Step~1.} Let us compute the relative deprivation measure for the $n$ individuals when society is organized as a level-$k$ star network. For individuals indexed by $i = 1, 2, \ldots, n-k$, the relative deprivation is governed primarily by the structure of the adjacency matrix $A$ in (\ref{adj-matrix-kstar}). In particular, we have
\begin{equation}
RD_i^{(k)} = \frac{1}{k} \sum_{j=1}^{k} \left( y_{n+1-j} - y_i \right) = \left( \frac{1}{k} \sum_{j=1}^{k} y_{n+1-j} \right) - y_i.
\end{equation}
On the other hand, for individuals indexed by $i = n-k+1, \ldots, n-1, n$, the relative deprivation is primarily driven by the structure of the positive difference matrix, $P$, in (\ref{pos-diff-matrix-kstar}). In particular, we have
\begin{align*}
RD_{n-k+1}^{(k)} & = \left( \frac{1}{n-1} \sum_{j=1}^{k-1} y_{n+1-j} \right) - \left( \frac{k-1}{n-1} \right) y_{n-k+1}, \\
RD_{n-k+2}^{(k)} & = \left( \frac{1}{n-1} \sum_{j=1}^{k-2} y_{n+1-j} \right) - \left( \frac{k-2}{n-1} \right) y_{n-k+2}, \\
 & \vdots \\
RD_{n-1}^{(k)} & =   \left( \frac{1}{n-1} y_{n} \right) -  y_{n-1}, \\
RD_n^{(k)} & = 0
\end{align*}

\textit{Step~2.} Let us compute the relative deprivation measure for the $n$ individuals when society is organized as a level-$(k-1)$ star network. For individuals indexed by $i = 1, 2, \ldots, n-k+1$, the relative deprivation is governed primarily by the structure of the adjacency matrix $A$ in (\ref{adj-matrix-kstar}). In particular, we have
\begin{equation}
RD_i^{(k-1)} = \frac{1}{k-1} \sum_{j=1}^{k-1} \left( y_{n+1-j} - y_i \right) = \left( \frac{1}{k-1} \sum_{j=1}^{k-1} y_{n+1-j} \right) - y_i.
\end{equation}
On the other hand, for individuals indexed by $i = n-k+2, \ldots, n-1, n$, the relative deprivation is identical to what it is in a level-$k$ star network. In particular, we have
\begin{align*}
RD_{n-k+2}^{(k-1)} & = \left( \frac{1}{n-1} \sum_{j=1}^{k-2} y_{n+1-j} \right) - \left( \frac{k-2}{n-1} \right) y_{n-k+2}, \\
 & \vdots \\
RD_{n-1}^{(k-1)} & =   \left( \frac{1}{n-1} y_{n} \right) -  y_{n-1}, \\
RD_n^{(k-1)} & = 0
\end{align*}

\textit{Step~3.} Consider individuals indexed by $i$, where $n-k+2 \leq i \leq n$. For these individuals, the relative deprivation measure does not change between the level-$k$ and level-$(k-1)$ star networks.

Now, consider individuals indexed by $i$, where $1 \leq i \leq n-k$. For these individuals,
\begin{equation}
RD_{i}^{(k-1)} - RD_{i}^{(k)} = \left( \frac{1}{k-1} \sum_{j=1}^{k-1} y_{n+1-j} \right) - \left( \frac{1}{k} \sum_{j=1}^{k} y_{n+1-j} \right),
\end{equation}
which is the difference between the arithmetic mean income of the top $(k-1)$ and the arithmetic mean income of the top $k$ incomes. Since incomes are arranged in nondecreasing order, the first is no smaller than the second. Hence, the difference is nonnegative.

Finally consider \textit{the} individual indexed by $i=n-k+1$. For this individual,
\[
RD_{n-k+1}^{(k-1)} = \left( \frac{1}{k-1} \sum_{j=1}^{k-1} y_{n+1-j} \right) - y_{n-k+1}, \quad RD_{n-k+1}^{(k)} = \left( \frac{1}{n-1} \sum_{j=1}^{k-1} y_{n+1-j} \right) - \left( \frac{k-1}{n-1}\right) y_{n-k+1}.
\]
Thus,
\begin{align*}
RD_{n-k+1}^{(k-1)} - RD_{n-k+1}^{(k)} \geq 0
\end{align*}
if and only if
\begin{align*}
\left( \frac{1}{k-1} - \frac{1}{n-1}\right) \sum_{j=1}^{k-1} y_{n-j+1} \geq \left( 1 - \frac{k-1}{n-1}\right) y_{n-k+1}
\end{align*}
if and only if
\begin{align*}
\left( \frac{n-k}{(k-1)(n-1)}\right) \sum_{j=1}^{k-1} y_{n-j+1} \geq \left( \frac{n-k}{n-1}\right) y_{n-k+1}
\end{align*}
if and only if
\begin{equation}
y_{n-k+1} \leq \frac{1}{k-1} \sum_{j=1}^{k-1} y_{n-j+1}
\end{equation}
where the last inequality is true because the right hand side is the average income of the richest $(k-1)$ individuals and the left hand side is the income level of the richest $(k-1)$-th  richest individual and, according to definition~\ref{def:k-star-network}, incomes are nondecreasing in the index $i$.

Thus, term by term, the relative deprivation measure is higher or equal for each individual when the society moves to a level-$(k-1)$ star network from a level-$k$ star network keeping individual income levels unchanged. Hence, the numerator in (\ref{gini-sbf}) is higher for the level-$(k-1)$ star network than for a level-$k$ star network. Since the denominators are the same for both (because individual income levels remain unchanged), this implies that $I^{SBF} \left( k-1 \right) \geq I^{SBF} \left( k \right)$.

\subsection{Proof of Theorem~\ref{thm:counter-int}}\label{app:proof-thm-cint}
\textit{Part~1.} Let $F^{BC}_{1}$ and $F^{BC}_{2}$ denote the first term in (\ref{gini-bowles}) for a level-$1$ and level-$2$ network, respectively. Then, we have
\begin{align}
F^{BC}_{1} & = \frac{1}{n-1} \sum_{i=1}^{n-1} \left( y_n - y_i \right) = y_n - \left( \frac{1}{n-1} \right)y_{n-1} - \frac{1}{n-1} \sum_{i=1}^{n-2} y_i, \label{eq:bc_f_n1} \\
F^{BC}_{2} & = \frac{1}{2n-3} \left[ \sum_{i=1}^{n-1} \left( y_n - y_i \right) + \sum_{i=1}^{n-2} \left( y_{n-1} - y_i \right) \right]  = \left( \frac{n-1}{2n-3} \right) y_n - \left( \frac{n-3}{2n-3} \right)y_{n-1} - \frac{2}{2n-3} \sum_{i=1}^{n-2} y_i. \label{eq:bc_f_n2}
\end{align}
We want to find the condition under which (\ref{eq:bc_f_n2}) is larger than (\ref{eq:bc_f_n1}), that is
\[
\left( \frac{n-1}{2n-3} \right) y_n - \left( \frac{n-3}{2n-3} \right)y_{n-1} - \frac{2}{2n-3} \sum_{i=1}^{n-2} y_i \geq y_n - \left( \frac{1}{n-1} \right)y_{n-1} - \frac{1}{n-1} \sum_{i=1}^{n-2} y_i
\]
which is equivalent to establishing
\[
\left(\frac{n-3}{2n-3} + \frac{1}{n-1}\right)y_{n-1} + \left(\frac{1}{n-1} - \frac{2}{2n-3}\right) \sum_{i=1}^{n-2} y_i \geq \left(1 - \frac{n-1}{2n-3}\right)y_{n}
\]
which is equivalent to establishing
\[
\left(\frac{n}{n-1}\right)y_{n-1} - \frac{1}{(n-1)(n-2)} \sum_{i=1}^{n-2} y_i \leq y_n,
\]
the upper bound in the first part of the theorem. Since, $y_{n-1} \leq y_n$ by assumption, this establishes the lower bound and completes the first part of the theorem.

\textit{Part~2.} Let $F^{MSJ}_{1}$ and $F^{MSJ}_{2}$ denote the first terms in the respective expressions of $I^{MSJ}$ in (\ref{gini-mjs}) for a level-$1$ and level-$2$ network. Then, 
\begin{align}
F^{MSJ}_{1} & = \frac{1}{n-1} \left[ \left( n-1 \right) y_n - y_{n-1} - \sum_{i=1}^{n-2} y_i \right] \label{eq:mjs_f_n1}\\
F^{MSJ}_{2} & = \frac{1}{n(n-1)} \left[ \left( \frac{n^2-n+2}{2} \right) y_n + \left( \frac{n^2-n-6}{2} \right) y_{n-1} - \left(n+1 \right) \sum_{i=1}^{n-2} y_i \right]. \label{eq:mjs_f_n2}
\end{align}
To see how we get (\ref{eq:mjs_f_n1}), recall that the first terms in (\ref{gini-bowles}) and (\ref{gini-mjs}) are identical for a star network. Since a level-$1$ start network is a standard star network, $F^{MSJ}_{1}=F^{BC}_{1}$ from (\ref{eq:bc_f_n1}), which has been rearranged to give (\ref{eq:mjs_f_n1}). To see how we get (\ref{eq:mjs_f_n2}), we need to use the definition of the neighborhood average difference in (\ref{eq:delta-defn}). Using this definition, for a level-$2$ star network, we have
\begin{align*}
\Delta_i & = \frac{1}{2} \left( y_n - y_i + y_{n-1} - y_i\right), \quad i=1, 2, \ldots, n-2; \\
\Delta_{n-1} & = \frac{1}{n-1} \left(y_{n-1} - y_1 + \cdots + y_{n-1} - y_{n-2} + y_{n} - y_{n-1} \right); \\ 
\Delta_{n} & = \frac{1}{n-1} \left(y_{n} - y_1 + \cdots + y_{n} - y_{n-2} + y_{n} - y_{n-1} \right).
\end{align*}
Using these expressions, we can compute 
\begin{align*}
F^{MSJ}_{2} & = \frac{1}{n} \sum_{i=1}^{n} \Delta_i \\
& = \frac{1}{2n} \left\lbrace (n-2) \left( y_{n} + y_{n-1} \right) - 2 \sum_{i=1}^{n-2} y_i \right\rbrace  + \frac{1}{n(n-1)} \left\lbrace y_n + (n-3) y_{n-1}  - \sum_{i=1}^{n-2} y_i \right\rbrace  \\
 & \quad + \frac{1}{n(n-1)} \left\lbrace (n-1) y_n -  y_{n-1}  - \sum_{i=1}^{n-2} y_i \right\rbrace \\
 & = \frac{1}{n} \left\lbrace \left( \frac{n^2 -n+2}{2(n-1)} \right) y_n + \left(\frac{n^2 -n-6}{2(n-1)} \right) y_{n-1}  - \left(\frac{n+1}{n-1} \right) \sum_{i=1}^{n-2} y_i \right\rbrace,
\end{align*}
which is the expression in (\ref{eq:mjs_f_n2}).

Let us now find the condition under which $(n-1)F^{MSJ}_{2} \geq (n-1)F^{MSJ}_{1}$. Using the expressions in (\ref{eq:mjs_f_n2}) and (\ref{eq:mjs_f_n1}), we have the $(n-1)F^{MSJ}_{2}\geq (n-1)F^{MSJ}_{1}$ if and only if
\begin{align*}
\left( \frac{n^2 -n+2}{2n} \right) y_n + \left(\frac{n^2 -n-6}{2n} \right) y_{n-1}  - \left(\frac{n+1}{n} \right) \sum_{i=1}^{n-2} y_i \geq  (n-1) y_n - y_{n-1} - \sum_{i=1}^{n-2} y_i
\end{align*}
which is equivalent to
\[
\left(\frac{n^2 +n-6}{2n} \right) y_{n-1}  - \frac{1}{n}  \sum_{i=1}^{n-2} y_i \geq  \left( \frac{n^2 -n-2}{2n} \right) y_n
\]
which is equivalent to
\[
\left(\frac{n^2 +n-6}{n^2 -n-2} \right) y_{n-1}  - \frac{2}{n^2 -n-2}  \sum_{i=1}^{n-2} y_i \geq y_n,
\]
which is the upper bound in the second part of the theorem. Since, $y_{n-1} \leq y_n$ by assumption, this gives the lower bound and establishes the second part of the theorem.

\bibliographystyle{apalike} 
\bibliography{sgini-refs}    

\begin{thebibliography}{}

\bibitem[Bowles and Carlin, 2020]{BOWLES2020108789}
Bowles, S. and Carlin, W. (2020).
\newblock Inequality as experienced difference: A reformulation of the {Gini}
  coefficient.
\newblock {\em Economics Letters}, 186:108789.

\bibitem[Bowles and Carlin, 2024]{bowles_carlin_2024}
Bowles, S. and Carlin, W. (2024).
\newblock Axioms and intuitions about societal inequality: What does the {Gini}
  coefficient measure?
\newblock {\em Journal of Income Distribution}, 32(3-4):45--54.

\bibitem[Chiang, 2011]{chiang_2011}
Chiang, Y.-S. (2011).
\newblock Judgment of distributional inequality in networks.
\newblock {\em Social Networks}, 33:342--349.

\bibitem[Chiang, 2015]{chiang_2015}
Chiang, Y.-S. (2015).
\newblock Inequality measures perform differently in global and local
  assessments: An exploratory computational experiment.
\newblock {\em Physica A}, 437:1--11.

\bibitem[Deaton, 1997]{deaton_1997}
Deaton, A. (1997).
\newblock {\em The Analysis of Household Surveys: A Microeconometric Approach
  to Development Policy}.
\newblock World Bank, Washington, D.C.

\bibitem[DiMaggio and Garip, 2012]{dimaggio_garip_2012}
DiMaggio, P. and Garip, F. (2012).
\newblock Network effects and social inequality.
\newblock {\em Annual Review of Sociology}, 38:93--118.

\bibitem[Festinger, 1954]{festinger_1954}
Festinger, L. (1954).
\newblock A theory of social comparison processes.
\newblock {\em Human Relations}, 7(2):117--140.

\bibitem[Jasso, 1979]{jasso_1979}
Jasso, G. (1979).
\newblock {On Gini's mean difference and Gini's index of concentration}.
\newblock {\em American Sociological Review}, 44(5):867--870.

\bibitem[Lambert, 2001]{lambert_2001}
Lambert, P.~J. (2001).
\newblock {\em The Distribution and Redistribution of Income}.
\newblock Manchester University Press, New York, NY, {Third} edition.

\bibitem[Mamunuru et~al., 2025]{MAMUNURU2025106799}
Mamunuru, S.~M., Shrivastava, A., and Jayadev, A. (2025).
\newblock Social networks and experienced inequality.
\newblock {\em Journal of Economic Behavior \& Organization}, 229:106799.

\bibitem[Ray, 1998]{ray_1998}
Ray, D. (1998).
\newblock {\em Development Economics}.
\newblock Princeton University Press, Princeton, NJ.

\bibitem[Ray, 2024]{ray_2024}
Ray, D. (2024).
\newblock Notes on ``{Notes on the Gini} coefficient''.
\newblock {\em Journal of Income Distribution}, 32(3-4):40--44.

\bibitem[Sen, 1997]{sen_1997}
Sen, A. (1997).
\newblock {\em On Economic Inequality. Expanded edition with a substantial
  annexe by James E. Foster and Amartya Sen}.
\newblock Oxford University Press, New York, NY.

\bibitem[Sethi, 2024]{sethi_2024}
Sethi, R. (2024).
\newblock Notes on the {Gini} coefficient.
\newblock {\em Journal of Income Distribution}, 32(3-4):35--39.

\bibitem[Stark et~al., 2024]{stark_etal_2024}
Stark, O., Bielawski, J., and Falniowski, F. (2024).
\newblock Measuring income inequality in social networks.
\newblock {\em The Journal of Economic Inequality}, 22:333--356.

\bibitem[T{\'o}th et~al., 2021]{Toth2021}
T{\'o}th, G., Wachs, J., Di~Clemente, R., Jakobi, {\'A}., S{\'a}gv{\'a}ri, B.,
  Kert{\'e}sz, J., and Lengyel, B. (2021).
\newblock Inequality is rising where social network segregation interacts with
  urban topology.
\newblock {\em Nature Communications}, 12:1143.

\bibitem[Yitzhaki, 1979]{yitzhaki_1979}
Yitzhaki, S. (1979).
\newblock Relative deprivation and the {Gini} coefficient.
\newblock {\em The Quarterly Journal of Economics}, 93(2):321--324.

\end{thebibliography}

\end{document}